\newcommand{\resply}{respectively}
\newcommand{\langue}[1]{\textit{#1}}
\newcommand{\abs}[1]{\left\lvert #1\right\rvert}
\newcommand{\norm}[1]{\left\lVert #1\right\rVert}
\newcommand{\set}[1]{\left\{ #1\right\}}
\newcommand{\itbul}{\item[$\bullet$]}
\newcommand{\veps}{\varepsilon}
\newcommand{\ud}{\mathrm{d}}
\newcommand{\uD}{\mathrm{D}}
\newcommand{\N}{\mathbb{N}}
\newcommand{\R}{\mathbb{R}}
\DeclareMathOperator{\Id}{Id}
\DeclareMathOperator{\tr}{tr}
\DeclareMathOperator{\Diff}{Diff}
\newcommand{\Lieder}{\mathscr{L}}
\newcommand{\DP}[2]{\frac{\partial #1}{\partial #2}}
\newcommand{\Sph}{\mathbb{S}}
\newcommand{\Hyp}{\mathbb{H}}
\newcommand{\defeq}{:=}
\newcommand{\eqdef}{=:}
\newcommand{\nablao}{\nabla}
\newcommand{\G}{\Gamma}
\newcommand{\Scal}{\mathrm{Scal}}
\newcommand{\Rm}{\mathrm{R}}
\newcommand{\Ric}{\mathrm{Ric}}
\newcommand{\Vol}{\mathrm{Vol}}
\newcommand{\vol}{\mathrm{vol}}
\DeclareMathOperator{\divg}{div}
\newcommand{\be}{\begin{equation}}
\newcommand{\ee}{\end{equation}}
\newcommand{\bi}{\begin{itemize}}
\newcommand{\ei}{\end{itemize}}
\newtheorem*{Canclemma}{Cancellation Lemma}
\newtheorem{engthm}{Theorem}[section]
\newtheorem{prop}[engthm]{Proposition}
\newtheorem{engcor}[engthm]{Corollary}
\theoremstyle{definition}
\newtheorem{engdef}[engthm]{Definition}
\theoremstyle{remark}
\newtheorem{engrk}[engthm]{Remark}
\author{B.~Michel\thanks{I3M -- Universit\'e Montpellier 2 -- France.
E-mail: \texttt{benoit.michel@math.univ-montp2.fr}}}
\title{Geometric invariance of mass-like asymptotic invariants}
\date{}
\begin{document}

\maketitle

\begin{abstract}
We study coordinate-invariance of some asymptotic invariants such as the ADM mass
or the Chru\'sciel-Herzlich momentum, given by an integral over a ``boundary at infinity''.
When changing the coordinates at infinity, some terms in the change of integrand
do not decay fast enough to have a vanishing integral at infinity; but they may be gathered
in a divergence, thus having vanishing integral over any closed hypersurface.
This fact could only be checked after direct calculation
(and was called a ``curious cancellation''). We give a conceptual explanation thereof.
\end{abstract}

\section{Introduction}

General Relativity has introduced a new kind
of geometric invariants that depend on the geometry ``at infinity'' of
a non compact Riemannian manifold. Given such a $(M,g)$, and a reference
Riemannian metric $g_0$ to which $g$ is asymptotic, these invariants
are formally defined as an integral over the ``boundary at infinity'' of $M$
of a field of $1$-forms $\mathbb{U}(g,g_0)$:
\begin{equation}
m(g,g_0)\defeq\oint_{S_\infty}\mathbb{U}(g,g_0)(\nu)\ud S. \label{defmass0}
\end{equation}
The integral over $S_\infty$
is understood as a limit
\[
\lim_{r\to\infty}\oint_{S_r}\mathbb{U}(g,g_0)(\nu)\ud S
\]
where $(S_r)_r$ is a family of closed hypersurfaces enclosing the whole of $M$ when
$r\to\infty$,
and $\nu$ and $\ud S$ respectively are the outer unit normal and induced volume
measure of $S_r$ with respect to $g_0$. In fact, $g$ and $g_0$ could encode
other geometric data, for example first and second fundamental forms of
a space-like hypersurface in a space-time. The ADM and Abbott-Deser
energy-momentum \cite{ADM-mass}, \cite{AbbottDeser-mass}, mathematically
studied in terms of Cauchy data in \cite{Chrusciel-mass}, \cite{Bartnik-mass},
\cite{CH-massAH}, \cite{ChrNagy-massAAdS}, belong to that category. We shall
call such invariants \emph{total charges}.

In general, many mutually isometric but distinct $g_0$'s exist and are asymptotic
to the given $g$, for instance pulled-back $\Psi^*g_0$ when $\Psi$ is any
diffeomorphism of $M$ suitably asymptotic to the identity.
However, in the examples mentioned above, with appropriate decay conditions,
it is proven (\cite{Chrusciel-mass}, \cite{Bartnik-mass}, \cite{CH-massAH},
\cite{ChrNagy-massAAdS}) that the resulting $m(g,g_0)$ does not depend on the
particular chosen $g_0$. The proof relies on an algebraic fact that we recall
below, in the simply stated case of the ADM mass of an asymptotically flat manifold.

The Riemannian manifold $(M,g)$ is said to be asymptotically flat when there
exists a set of coordinates $(x^i)_i$ defined outside a compact subset of $M$,
in which the metric coefficients satisfy $g_{ij}=\delta_{ij}+e_{ij}$,
where $e_{ij}=O(\abs{x}^{-\tau})$ and similar decay holds for $\partial_ke_{ij}$.
The metric $g$ is then asymptotic to the flat metric $g_0\defeq\sum d x^{i\,2}$.
One defines the $1$-form
\begin{align*}
\mathbb{U}_i(g,g_0)\!&\;=\sum\nolimits_j\partial_jg_{ij}-\partial_ig_{jj}\\
\text{\emph{i.e.} }\mathbb{U}(g,g_0)&\defeq\divg_{g_0}g-d(\tr_{g_0}g)
\end{align*}
and the ADM mass of $g$, \emph{a priori} with respect to this particular
chart at infinity, is given by formula \eqref{defmass0}.

Let $\hat{x}^{i}$ be other coordinates in which $g$ is asymptotically flat.
It is an intuitive---but non trivial---theorem that, maybe after rotating and
translating the $\hat{x}^i$'s, one has $\hat{x}^i-x^i\eqdef v^i=O(\abs{x}^{1-\tau})$, and
similar decay holds for two derivatives of $v^i$. An easy calculation then shows that
the $1$-forms $\mathbb{U}$ defining the ADM mass in the two coordinate sets respectively
are related by
\[
\mathbb{U}(g,g_0)-\mathbb{U}(g,\hat{g}_0)=\sum\nolimits_{i,j}
   \partial_i\bigl(\partial_jv^i-\partial_iv^j\bigr)d x^j+O(\abs{x}^{-2\tau-1}).
\]
Provided $\tau>\frac{n-2}{2}$, the decay of the last term is faster than the critical rate
$r^{1-n}$ (the volume of large coordinates spheres), so this term does not contribute to
the limit of the integrals over large spheres.
The first term in the right-hand side, however, decays slower than
the critical rate. Invariance of the ADM mass comes from the fact that it happens to be the
divergence of an alternating $2$-form, so that its integral over any closed hypersurface
vanishes. But this divergence curiously only appears in a direct calculation, knowing the explicit
formula for $\mathbb{U}(g,g_0)-\mathbb{U}(g,\hat{g}_0)$, so that it was for example called a
``curious cancellation'' by R. Bartnik \cite{Bartnik-mass}.

In the asymptotically hyperbolic setting, the definition
of the Chru\'sciel-Herzlich mass \cite{CH-massAH} is more elaborate and uses an auxiliary
function in the integrand $\mathbb{U}$. Yet its invariance also relies on the appearance of
a divergence gathering the terms
above the critical decay rate, and still only in a direct calculation.

We give in this note a conceptual explanation to that ``curious cancellation''.
Simplifying a little, we follow the three following steps:
\begin{enumerate}
\item \label{step.defcharge} We give a general construction of a charge integrand $\mathbb{U}$
and the related total charges,
\item \label{step.diffintgd} We obtain a general expression for the difference of charge
integrands computed in two different charts at infinity, modulo fast-decaying terms,
\item \label{step.curcanc} We prove that this expression is the divergence of a field
of alternating $2$-forms.
\end{enumerate}
Point \ref{step.defcharge} follows M.~Herzlich \cite[§3.2]{Herzlich-massAH}---see also
\cite{CJL-Bondimass} and \cite{Maerten-PosMomt}. In fact Points \ref{step.diffintgd}
and \ref{step.curcanc} answer Question $3.5$ in the first reference. Both appear to be
very simple. It must be noticed however that in Point \ref{step.diffintgd} we pick up
an expression among many other possible ones, all equal modulo fast decaying terms. To
find the one appropriate for Point \ref{step.curcanc}, one needs to guess in advance
the phenomenon that occurs there, despite it may not be properly formulated without
knowing the accurate formula of Point \ref{step.diffintgd}. The interest of the construction
of Point \ref{step.defcharge} is then justified \langue{a posteriori} only.

Our study is unrelated to Hamiltonian or
Lagrangian formalism, see for example \cite{WaldZoupas}, which
builds uniquely defined Hamiltonians, but only in restriction
to phase space, \emph{i.e.} geometric data that satisfy some
constraints. (Moreover the dependence on the ``appropriate decay conditions'',
as usually named in the literature, is not very clear at a formal level.) Our construction
is closer in spirit to that of \cite{AndersonTorre}, but different, in that we are not
interested in conservation laws, but only in geometric invariants that are independent,
to a certain extent, of the background. One may check indeed that the formulae given in
\cite{AndersonTorre} are not those obtained following the construction presented
here.\footnote{A general result about gauge invariance seems to be claimed in
\cite{AndersonTorre}, but unfortunately does not appear in the literature, even as a
preprint.}

The outline of the article is the following. In Section \ref{sec.charge}, we give the
construction of the charge integrands $\mathbb{U}$ considered in this paper, see
Definition \ref{defU}, and we use it to define a total charge in Defintion \ref{defcharge}.
In Section \ref{sec.geoinv} we prove the main result, Theorem \ref{mainres}, according
to which the total charge is invariant under a diffeomorphism suitably asymptotic to the
identity at infinity.  It covers Points \ref{step.diffintgd} and \ref{step.curcanc} above,
which correspond respectively to formula \eqref{diffU} and the
\hyperref[lemma.canc]{Cancellation Lemma} page \pageref{lemma.canc}.
The meaning of ``suitably asymptotic to the identity'' is quite technical
to state accurately. Since it is not the main interest of this paper, the precise
discussion is postponed to
the appendix \ref{sec.controlR1}. Before that, we show in Section \ref{sec.ex}
how our reasoning apply to already known invariants.

\section{Total charge} \label{sec.charge}

Let $M$ be a non compact differential manifold without boundary. The geometric data we
consider here are sections $h=(g,k)$ of a bundle $H=\mathscr{M}\times_ME$
over $M$, where $\mathscr{M}$ is the bundle of metrics and $E$ a natural tensor
bundle. The role of local charge density will be played by
a natural tensor-valued differential operator
\[
\Phi : \Gamma(H) \longrightarrow \Gamma(F)
\]
where $F$ is a natural tensor bundle over $M$, and the letter $\Gamma$ denotes the
space of $C^\infty$ sections. This operator needs not be linear, but we
require invariance under diffeomorphisms: \emph{i.e.} for all sections $h$ of
$H$ and all diffeomorphisms $\Psi$ of $M$,
\[
\Psi^*\bigl(\Phi(h)\bigr) = \Phi(\Psi^*h).
\]

For clarity we will suppose that the background data are given on an other manifold
$M_0$. Let us write $\mathscr{M}_0$, $E_0$, and $F_0$ for the bundles over $M_0$
that correspond to $\mathscr{M}$, $E$, and $F$ respectively. 
We denote by $h_0=(g_0,k_0)$ a section of $H_0\defeq\mathscr{M}_0 \times_{M_0} E_0$ that
will be used as reference. Because of naturality, $\Phi$ is defined on $M_0$; we set
$\Phi_0\defeq\Phi(h_0)$. We also introduce
the dot product $\langle\cdot,\cdot\rangle_0$ and norm $\abs{\cdot}_0$
induced by $g_0$ on natural tensor bundles over $M_0$.

We now define $\mathbb{U}$.

\begin{engdef} Let $\uD\Phi_0$ be the linearization of $\Phi$ at $h_0=(g_0,k_0)$
and $\uD\Phi_0^*$ be its formal adjoint with respect to $g_0$.

For a $C^\infty$ section $V$ of $F_0$ and a $C^\infty$ section $\eta$ of
$S^2M_0\times_{M_0}E_0$, the charge integrand $\mathbb{U}(V,\eta)$ is the $1$-form
appearing in the following integration-by-part formula:
\[
\bigl\langle V,\uD\Phi_0(\eta)\bigr\rangle_0 = \divg_0 \mathbb{U}(V,\eta) +
 \bigl\langle\uD\Phi_0^*V,\eta\bigr\rangle_0.
\]
Here $\divg_0$ is the
$g_0$-divergence operator: if $\nabla$ is the Levi-Civita connection of
$g_0$ and $\alpha$ is a form field, $\divg_0 \alpha=\nabla^i\alpha_i$.
\label{defU}\end{engdef}
\begin{engrk} The operator $\mathbb{U}$ is a differential operator, linear
and of order $1$ less than $\Phi$ in each of its arguments.
\label{opU}\end{engrk}

To use $\mathbb{U}$ to define asymptotic invariants, we are interested in the situation
where the outside of a compact subset of $M$ is diffeomorphic to the outside of a compact
subset of $M_0$, \emph{i.e.} when the following definition is not empty:
\begin{engdef}
A \emph{diffeomorphism at infinity} is a diffeomorphism
\[
\Psi : M_0-K_0 \longrightarrow M-K
\]
where $K_0$ and $K$ are compact subsets of respectively $M_0$ and $M$.
\label{defdiffinfty}\end{engdef}

Let $h$ be a section of $H$, and set $e\defeq\Psi^*h-h_0$ outside $K_0$. We will require
$e$ to tend to $0$ at infinity (in a sense to be made precise below). This justifies the
use of the following Taylor formula, which however always makes sense:
$\Phi(\Psi^*h)-\Phi_0=\uD\Phi_0(e)+Q(e)$, where $Q(e)$ is the quadratic and higher order
remainder.
Contracting with a test-section $V$ of $F_0$ to get a numerical value, we obtain:
\begin{align}
\bigl\langle V,\Phi(\Psi^*h)-\Phi_0\bigr\rangle_0
   &= \bigl\langle V,\uD \Phi_0(e)\bigr\rangle_0
     + Q\bigl(V,e\bigr)\notag\\
   &= \divg_{0}\mathbb{U}(V,e) +\big\langle\uD \Phi_0^*(V),e\big\rangle_0
      + Q(V,e) \label{eqU}
\end{align}
where $Q(V,e)\defeq\langle V,Q(e)\rangle_0$ for short, and Definition \ref{defU} has been
used. Let us introduce
\[
\mathscr{N}_0 \defeq \set{V \in\Gamma(F_0) | \uD\Phi_0^*V=0}.
\]
When $V\in\mathscr{N}_0$, the right-hand side of \eqref{eqU} contains only
a divergence plus terms of quadratic and higher order in $e$. Thus we are interested
in the following class of data $h$:
\begin{engdef}
A section $h$ of $H$ is said to have \emph{well-defined total charge} with respect
to a diffeomorphism at infinity $\Psi$ and to $V\in\mathscr{N}_0$ (respectively
to a subspace $\mathscr{N}'_0\subset\mathscr{N}_0$) when:
\begin{enumerate}
\item \label{hyp.totch.lc} $\bigl\langle V,\Phi(\Psi^*h)-\Phi_0\bigr\rangle_0$
is integrable (with respect to the volume density induced by $g_0$),
\item \label{hyp.totch.q} writing $e=\Psi^*h-h_0$, $Q\big(V,e)$ is integrable
\end{enumerate}
(respectively when \ref{hyp.totch.lc} and \ref{hyp.totch.q} hold for all
$V\in\mathscr{N}'_0$).
\label{hyp.totch}\end{engdef}
For such $h$'s one imitates the classical definitions:
\begin{engdef} \label{defcharge}
Let $h$ have well-defined total charge with respect to some diffeomorphism at
infinity $\Psi$ and some $V\in \mathscr{N}_0$.

Let $(B_k)_{k\in\N}$ be an non-decreasing
exhaustion of $M_0$ such that each $B_k$ has smooth compact boundary $S_k$.
One defines the \emph{total charge} as the following limit
\[
m(h,\Psi,V) \defeq
   \lim_{k \to \infty} \oint_{S_k} \mathbb{U}\big(V,\Psi^*h-h_0\big)(\nu)\ud S
\]
where $\nu$ and $\ud S$ are the outer normal and surface measure of $S_k$
with respect to $g_0$.
\end{engdef}
The limit is finite and independent of the chosen exhaustion
$(B_k)$. Let us recall the classical proof of this fact.
\begin{proof}
When $V\in\mathscr{N}_0$, Equation \eqref{eqU} becomes
\[
\divg_0\mathbb{U}(V,e)=\bigl\langle V,\Phi(\Psi^*h)-\Phi_0\bigr\rangle_0
-Q(V,e).
\]
There exists $k_1$ such that for $k\geq k_1$,
$M_0-B_k$ is included in the domain $M_0-K_0$ of $\Psi$.
Integrating over $B_k-B_{k_1}$ with respect to the volume element $\ud\vol_0$
of $g_0$:
\[
\oint_{S_k}\mathbb{U}(V,e)(\nu)\ud S=
   \oint_{S_{k_1}}\mathbb{U}(V,e)(\nu)\ud S
   +\int_{B_k-B_{k_1}}\Bigl[\bigl\langle V,\Phi(\Psi^*h)-\Phi_0\bigr\rangle_0
-Q(V,e)\Bigr]\ud\vol_0.
\]
Definition \ref{hyp.totch} insures that the
right-hand side has finite limit when $k\to\infty$. Moreover the
formula shows that the right-hand side is independent of $k_1$; in fact $B_{k_1}$
and $S_{k_1}$ there could be replaced by any $B$ and $S=\partial B$ respectively,
without changing its value (provided $B$ is large enough for $\Psi$ to be defined
on $M_0-B$). Thus the limit
\[\oint_{S_{k_1}}\mathbb{U}(V,e)(\nu)\ud S +
   \int_{M_0-B_{k_1}}\Bigl[\bigl\langle V,\Phi(\Psi^*h)-\Phi_0\bigr\rangle_0
-Q(V,e)\Bigr]\ud\vol_0
\]
is independent of the exhaustion $(B_k)$.
\end{proof}
\begin{engrk} \label{rk.equiv}
Let $G_0$ be the group of diffeomorphisms of $M_0$ fixing $h_0$. It
acts on $\mathscr{N}_0$ by pull-back. The total charge is
of particular interest when $h$ satisfies Defintion \ref{hyp.totch} with respect to some
$\Psi$ and to a subspace $\mathscr{N}_0'\subset\mathscr{N}_0$ invariant under $G_0$.
Indeed one straightforwardly checks (due to the independence with respect to the exhaustion
$B_k$) that the total charge functional $m$ is then a $G_0$-equivariant linear form on
$\mathscr{N}_0'$, in the sense that for all $A\in G_0$ and $V\in\mathscr{N}_0'$
\[
m(h,\Psi,A^*V)=m(h,\Psi\circ A^{-1},V).
\]
\end{engrk}
\begin{engrk}
In general, $\mathscr{N}_0$ may contain only the zero section of $F_0$, making
Definitions \ref{hyp.totch} and \ref{defcharge} trivial. There are however
many examples were it does not, see Section \ref{sec.ex}.
\end{engrk}

\section{Geometric invariance} \label{sec.geoinv}

We compare the total charges given by two diffeomorphisms at infinity
\[
\xymatrix{ & (M,h) \\
(M_0,h_0) \ar[ur]^{\Psi_1} & & (M_0,h_0) \ar[ul]_{\Psi_2}}
\]
such that $h$ has well-defined total charge with respect to some $V\in
\mathscr{N}_0$ and to both $\Psi_1$ and $\Psi_2$. We assume that $\Psi\defeq\Psi_1^{-1}
\circ\Psi_2$ tends to the identity at infinity, in the following sense.

Let $\exp:TM_0\to M_0$ be the exponential map of $g_0$. For a section $\zeta$ of $TM_0$, let us
write $\exp\circ\zeta : x \mapsto \exp_x\bigl(\zeta(x)\bigr)$. Since $\exp\circ0=\Id_{M_0}$,
there is a $C^1$ neighborhood $\mathscr{U}$ of the zero section of $TM_0$ such that for
all $\zeta\in\mathscr{U}$, $\exp\circ\zeta$ is a diffeomorphism such that
$1/4g_0\leq(\exp\circ\zeta)^*g_0\leq 4g_0$ (see Proposition \ref{prop.R1}-part \ref{propR1.diff}).

\begin{engdef}
A diffeomorphism at infinity $\Psi:M_0-K_1\longrightarrow M_0-K_2$ is said to
be \emph{asymptotic to the identity} when outside a compact subset one has
\[
\Psi(x)=\exp_x\bigl(\zeta(x)\bigr),
\]
with $\zeta\in\mathscr{U}$ a smooth vector field.
\label{def.psiasid}\end{engdef}

\begin{engrk} \label{rk.asrig} We shall prove that the total charge is invariant
under a change of diffeomorphism at infinity that is asymptotic to the identity.
This is a restrictive assumption. Notice however that
there are many cases of interest where any $\Psi$ such that $\Psi^*h_0$
satisfy appropriate decay towards $h_0$ may be written $\Psi_0\circ A$,
where $A$ fixes $h_0$ and $\Psi_0$ is asymptotic to the identity in the
sense given above. In this case, the total charge becomes a genuine
linear form on $\mathscr{N}_0$, equivariant under the group of automorphisms
of $h_0$. We call this feature \emph{asymptotic rigidity} of the background datum.

Examples of asymptotically rigid backgrounds are the Euclidean space $\R^n$ (where
$h_0$ is the canonical flat metric)
\cite{Chrusciel-mass},\cite{Bartnik-mass}, the hyperbolic space
\cite{ChrNagy-massAAdS}, \cite{CH-massAH} or more generally a symmetric rank-$1$
space of non-compact type \cite{Herzlich-massAH}, or, in a more complicated
way, the Minkowski space at spatial infinity \cite{Chrusciel-massAMink}.
Asymptotic rigidity is however not general: for example it is well known not
to hold at null infinity in the Minkowski space when the decay conditions allow
gravitational radiation.
\end{engrk}
Let us write $h_1\defeq\Psi_1^*h$, $e_{1}\defeq h_1-h_0$
and $e_2\defeq\Psi_2^*h-h_0=\Psi^*h_1-h_0$. One computes
\begin{align*}
e_2-e_1 &=\Psi^*h_1-h_1 = \Psi^*(h_0+e_1)-(h_0+e_1)\\
   &= \Lieder_\zeta h_0 + R_1
\end{align*}
where $\Lieder$ is the Lie derivative, and
\begin{equation}
R_1\defeq(\Psi^*-\Id-\Lieder_\zeta)h_0+(\Psi^*-\Id)e_1
\label{defR1}\end{equation}
is a remainder controlled by an expression quadratic in $\zeta$, $e_1$ and their first
derivatives. From an abstract point of view, this control may be expressed in terms of
upper bounds for the first derivatives of $\exp$ and the second derivatives of $h_0$
along the geodesic $t\mapsto\exp_x\bigl(t\zeta(x)\bigr)$, \emph{i.e} in terms of upper
bounds for the Riemann tensor $\Rm_0$ of $g_0$ and for $\nablao^2h_0$ along this
geodesic. Similar control applies to the iterated covariant
derivatives of $R_1$ (see Proposition \ref{prop.R1} in the \hyperref[sec.controlR1]{Appendix}).

Therefore
\begin{equation}
\mathbb{U}(V,e_2) - \mathbb{U}(V,e_1) = \mathbb{U}(V,\Lieder_\zeta h_0)
+ R_2 \label{diffU}
\end{equation}
where $R_2\defeq\mathbb{U}(V,R_1)$ is controlled by an expression quadratic
in $\zeta$ and $e_1$ and their derivatives up to the order of $\Phi$ (as a
differential operator). These terms may be considered as second order error,
and will not contribute to the limit $k\to\infty$ in Definition \ref{defcharge}
provided the decay conditions are well chosen. This justifies the second
assumption of our main result:

\begin{engthm} \label{mainres}
Assume that
\begin{enumerate}
\item \label{hyp.mr.invphi} $\Phi(h_0)$ is \emph{invariant} under flows of vector fields,
\item \label{hyp.mr.asid} the diffeomorphisms at infinity $\Psi_1$ and $\Psi_2$ are such
   that $\Psi_1^{-1}\circ\Psi_2$ is asymptotic to the identity in the sense of Definition
   \ref{def.psiasid},
\item \label{hyp.mr.decay} the family $(S_k)$ in Definition \ref{defcharge} and the section
   $V\in\mathscr{N}_0$, are such that $R_2$ defined above satisfies
   \[
   \sup_{S_k}\abs{R_2}_{g_0}\times \Vol_{g_0}(S_k)\xrightarrow[k\to\infty]{}0.
   \]
\end{enumerate}
Then $m(h,\Psi_1,V)=m(h,\Psi_2,V)$.
\end{engthm}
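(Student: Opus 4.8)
The plan is to reduce the equality of the two charges to the exact vanishing, on every sufficiently large sphere $S_k$, of the integral of the distinguished term $\mathbb{U}(V,\Lieder_\zeta h_0)$. Starting from Definition \ref{defcharge}, I would subtract the two limits and insert the already-established formula \eqref{diffU} to get
\[
m(h,\Psi_2,V)-m(h,\Psi_1,V)=\lim_{k\to\infty}\oint_{S_k}\bigl[\mathbb{U}(V,\Lieder_\zeta h_0)+R_2\bigr](\nu)\ud S .
\]
Hypothesis \ref{hyp.mr.decay} bounds $\bigl|\oint_{S_k}R_2(\nu)\ud S\bigr|$ by $\sup_{S_k}\abs{R_2}_{g_0}\cdot\Vol_{g_0}(S_k)\to0$, so the remainder drops out and only the main term survives. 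I would then prove that this main term is not merely asymptotically negligible but is \emph{identically zero} on each large $S_k$; this stronger statement is what I isolate as a Cancellation Lemma, and it is essential, since a naive divergence-free argument would only make the flux constant in $k$, not zero.

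The Cancellation Lemma I would prove is: \emph{for $V\in\mathscr{N}_0$ and any vector field $\zeta$, the $1$-form $\mathbb{U}(V,\Lieder_\zeta h_0)$ equals the $g_0$-divergence $\divg_0\omega$ of a field $\omega$ of antisymmetric $2$-tensors}. Granting this, the conclusion follows from a purely geometric sub-lemma: the flux of $\divg_0\omega$ through any closed hypersurface vanishes. Indeed, writing $\beta$ for the $2$-form obtained from $\omega$ by $g_0$, the $1$-form $\divg_0\omega$ is, up to sign, the codifferential $\delta\beta$, so its Hodge dual is $\pm\,d(\star\beta)$, an exact $(n-1)$-form; since $S_k$ is closed and $\omega$ is defined near it (as $S_k\subset M_0-K_0$ for large $k$), Stokes' theorem gives $\oint_{S_k}\mathbb{U}(V,\Lieder_\zeta h_0)(\nu)\ud S=\pm\oint_{S_k}d\bigl(i_{S_k}^*\star\beta\bigr)=0$. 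Every term of the sequence is thus zero, the limit vanishes, and $m(h,\Psi_1,V)=m(h,\Psi_2,V)$.

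The easy half of the Cancellation Lemma is divergence-freeness. Differentiating the naturality relation $\Psi^*\Phi(h_0)=\Phi(\Psi^*h_0)$ along the flow of $\zeta$ gives $\uD\Phi_0(\Lieder_\zeta h_0)=\Lieder_\zeta\Phi_0$, which vanishes by Hypothesis \ref{hyp.mr.invphi}. Feeding $\eta=\Lieder_\zeta h_0$ into the defining identity of Definition \ref{defU} and using $V\in\mathscr{N}_0$, i.e. $\uD\Phi_0^*V=0$, both right-hand terms disappear and leave
\[
\divg_0\mathbb{U}(V,\Lieder_\zeta h_0)=\bigl\langle V,\uD\Phi_0(\Lieder_\zeta h_0)\bigr\rangle_0-\bigl\langle\uD\Phi_0^*V,\Lieder_\zeta h_0\bigr\rangle_0=0 .
\]

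The hard part, and what I expect to be the main obstacle, is upgrading this to the antisymmetric-potential form. The key is that the identity holds for \emph{all} $\zeta$, so $\zeta\mapsto\mathbb{U}(V,\Lieder_\zeta h_0)\eqdef W(\zeta)$ is a linear differential operator, of the order of $\Phi$, whose $g_0$-divergence vanishes identically. I would use the elementary fact that a vector $w^i$ with $\xi_iw^i=0$ can be written $w^i=\xi_j\omega^{ij}$ for some antisymmetric $\omega^{ij}$ (when $\xi\neq0$): applied to the principal symbol of $W(\zeta)$, which lies in the kernel of the symbol $\xi_i(\cdot)^i$ of $\divg_0$ by the identity above, this yields an antisymmetric $\omega_{\mathrm{top}}(\zeta)$ such that $W(\zeta)-\divg_0\omega_{\mathrm{top}}(\zeta)$ has strictly lower order and, since $\divg_0\divg_0$ annihilates antisymmetric tensors, is again divergence-free; an induction on the order then produces the full potential $\omega(\zeta)$. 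This is a differential Poincaré lemma for the divergence complex, and it is here that one must, as the introduction warns, single out among the many representatives of $\mathbb{U}(V,e_2)-\mathbb{U}(V,e_1)$ modulo fast-decaying terms the distinguished one, $\mathbb{U}(V,\Lieder_\zeta h_0)$, for which the potential exists—the quadratic discrepancy thereby pushed into $R_2$ being harmless by Hypothesis \ref{hyp.mr.decay}.
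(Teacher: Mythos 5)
Your proposal is correct, and its skeleton coincides with the paper's: use \eqref{diffU} and Assumption \ref{hyp.mr.decay} to reduce the difference of charges to the limit of the fluxes of $\mathbb{U}(V,\Lieder_\zeta h_0)$, prove that this $1$-form is the divergence of a field of alternating $2$-tensors (the Cancellation Lemma, whose ``easy half''---$\divg_0\mathbb{U}(V,\Lieder_\zeta h_0)=\langle V,\Lieder_\zeta\Phi_0\rangle_0=0$ by naturality of $\Phi$ and $\uD\Phi_0^*V=0$---you establish exactly as the paper does), and then kill each flux identically by Stokes on the closed hypersurface $S_k$; your remark that mere divergence-freeness would only make the flux constant in $k$ is precisely the right point. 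Where you genuinely diverge is the ``hard half'' of the lemma: the paper Hodge-dualizes to an identically closed $(n-1)$-form depending equivariantly on $(h_0,V,\zeta)$ and invokes Wald's theorem \cite{Wald-closedforms} to obtain the potential (with a separate passage to the orientation cover in the non-orientable case), whereas you propose a self-contained induction on the order of the linear operator $\zeta\mapsto\mathbb{U}(V,\Lieder_\zeta h_0)$, peeling off divergences of antisymmetric tensors symbol by symbol; this buys independence from the citation and from the Hodge star (hence no orientability discussion), at the price of one subtlety you must make explicit. The pointwise fact that $\xi_iw^i=0$ implies $w^i=\xi_j\omega^{ij}$ is not enough, since the obvious solution $\omega^{ij}=(\xi^iw^j-\xi^jw^i)/\abs{\xi}^2$ is not polynomial in $\xi$ and so is not the symbol of a differential operator: what is needed is exactness of the Koszul complex $\Lambda^2TM_0\otimes S^{m-1}T^*M_0\to TM_0\otimes S^{m}T^*M_0\to S^{m+1}T^*M_0$ together with its $GL_n$-equivariant homotopy, so that the antisymmetric symbol produced at each stage is polynomial in $\xi$, depends naturally on the coefficients, and quantizes (via $\nablao$) to a globally defined tensor field on the annular region containing the $S_k$; one should also record the base case, namely that an order-zero operator with identically vanishing divergence is zero. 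With that supplied, your induction closes and the two arguments agree from the Cancellation Lemma onward.
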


Assumption \ref{hyp.mr.decay} is \emph{ad hoc}. When the curvature of $g_0$ and its
covariant derivatives are bounded, as well as $k_0$ and its derivatives, it can
be replaced by a statement easier to check: see Corollary \ref{coroR1}.

\begin{proof} From Assumption \ref{hyp.mr.decay}, one has
\[
m(h,\Psi_1,V)-m(h,\Psi_2,V)=
   \lim_{k\to\infty}\oint_{S_k}\mathbb{U}(V,\Lieder_\zeta h_0)(\nu)\ud S
\]
The theorem then follows from the \hyperref[lemma.canc]{Cancellation Lemma} below. The invariance
condition on $\Phi(h_0)$ is needed there.
\end{proof}

\begin{engrk}
Invariance of $\Phi(h_0)$ under $\Diff_0(M)$ is equivalent to the fact
that $\Phi(h_0)$ is a constant section of a trivial factor---\emph{i.e.} a
factor associated to a trivial representation of the linear group---of the bundle $F_0$.
The proof of the lemma shows in fact that the pointwise dot products
$\langle V,\Lieder_{\zeta}\Phi(h_0)\rangle_0$, when $V$ varies in $\mathscr{N}_0$
and $\zeta$ among vector fields, are obstructions to $\mathbb{U}(V,\Lieder_\zeta h_0)$
being a divergence. Therefore the invariance condition is presumably almost necessary.
\end{engrk}

\begin{Canclemma} Assume that the flows of vector fields leave $\Phi(h_0)$ invariant.
Then there exists a differential operator, equivariant under diffeomorphisms,
\[
\mathbb{V}:\Gamma(F_0\times_{M_0} H_0\times_{M_0} TM_0)\longrightarrow
   \Gamma(\Lambda^2M_0)
\]
such that for any $V \in \mathscr{N}_0=\ker \uD\Phi_0^*$ and any vector field
$\zeta$ on $M_0$ one has
\[
\mathbb{U}(V,\Lieder_\zeta h_0)=\divg_0\mathbb{V}(V,h_0,\zeta).
\]
\label{lemma.canc}\end{Canclemma}

\begin{engrk}
This lemma is a purely algebraic consequence of the definitions of
$\mathbb{U}$ and $\mathscr{N}_0$ and the $\Diff_0(M_0)$-invariance
of $\Phi_0$. The non-compactness, decay, \emph{etc.}, assumptions are
totally irrelevant here.
\end{engrk}

\begin{proof}
Let $\zeta$ be any vector field.
We apply Definition \ref{defU} with $\eta=\Lieder_{\zeta}h_0$ and $V\in\mathscr{N}_0$:
\[
\divg_0\mathbb{U}(V,\Lieder_{\zeta}h_0)
 =\bigl\langle V,\uD\Phi_0(\Lieder_{\zeta}h_0)\bigr\rangle_0
 =\bigl\langle V,\Lieder_\zeta\Phi_0\bigr\rangle_0
 =0
\]
using that $\uD\Phi_0(\Lieder_{\zeta}h_0)=\Lieder_{\zeta}\Phi_0$ because of
diffeomorphism invariance of the operator $\Phi$, and $\Lieder_{\zeta}\Phi_0=0$
from the assumption.

Assume now that $M_0$ is orientable. The Hodge star $\ast_0$ of $g_0$ conjugates
$\divg_0$ and the de Rham differential \cite[1.56 and errata]{Besse}, so that
$\zeta\mapsto\ast_0\mathbb{U}(V,\Lieder_\zeta h_0)$ is an operator:
\begin{itemize}
\item whose dependence on the triple $(h_0,V,\zeta)$ and on the orientation
is equivariant under diffeomorphisms,
\item whose values, when $h_0$ and $V\in\mathscr{N}_0$ are fixed and
$\zeta$ varies, are closed $(n-1)$-forms.
\end{itemize}
From a theorem of Wald \cite{Wald-closedforms}, there exists a form-valued operator
$\mathbb{V}^*$, equivariant under diffeomorphisms, such that for all $\zeta$,
$\ast_0\mathbb{U}(V,\Lieder_\zeta h_0)=d\mathbb{V}^*(h_0,V,\zeta)$.
The operator $\ast_0\mathbb{V}^*$ is the $\mathbb{V}$ claimed
in the theorem.

In the non-orientable case, the above reasoning works in the orientation
cover, and the formula $\mathbb{U}=\divg_0\mathbb{V}$ there projects down to $M_0$
(because through the canonical involution of the orientation cover the pulled-back
$\mathbb{U}$, $V$ and $h_0$ do not change and the Hodge star is changed into its opposite.
Following Wald's argument, $\mathbb{V}^*$ is changed into its opposite too, so $\mathbb{V}$ is not).
%
\end{proof}

\begin{engrk} 
Notice that the discussion preceding Theorem \ref{mainres} suggests that, if $\Phi$
is of order $\ell$, the decay conditions on $\Psi-\Id$ and $\Psi^*h-h_0$ should
concern $\ell$ derivatives of $\zeta$ and $h$. However it is well known that the
definition of the ADM mass for example requires only decay of the first derivatives of the
asymptotically flat metric, although it comes from
the order-$2$ scalar curvature operator (see Section \ref{sec.ex.scal.af}).

The reason is that we have studied here geometric invariance under a diffeomorphism $\Psi$
acting on $h$ and leaving $h_0$ fixed. The other point of view in the literature
is making $\Psi$ act on $h_0$ and leave $h$ fixed. These two points of view are
conjugated by a diffeomorphism acting on both $h$ and $h_0$, and are therefore equivalent.
However in the latter, $\Psi$ acts on $\mathbb{U}$, $V$, $\divg_0$, and so on,
so that the formal study is much more complicated. But with that point of view, control
on one less derivative is needed.

To give some detail, in the second point of view one needs to prove that
\[
(\Psi^*\mathbb{U})\bigl[(\Psi^*V),h-(\Psi^*h_0)\bigr]-\mathbb{U}(V,h-h_0)
\]
is a divergence up to terms at least quadratic in $\zeta$ (of Definition \eqref{def.psiasid})
and $h-h_0$. One checks that these
terms involve the derivatives of $h$ up to order $\ell-1$ only.%
\footnote{One may notice that the quadratic remainder
$Q(V,e)$ that appears in Part \ref{hyp.totch.q} of Definition \ref{hyp.totch}
involves in general the derivatives of $e$ up to order $\ell$.
This can be dealt with if $\Phi$ is quasilinear of order at least $2$, see Remark
\ref{rk.intQ}.}
\label{rk.limR2}\end{engrk}

\section{Examples} \label{sec.ex}

We recover here known examples, to the invariance of which Theorem \ref{mainres} gives
a formal proof.

\subsection{The scalar curvature operator} \label{sec.ex.scal}
First we take for $\Phi$ the scalar curvature operator. Here $E$, $E_0$ are the null
bundles over respectively $M$ and $M_0$, and $F$, $F_0$ are the trivial line bundles.
Let us write $g_0$ for the reference metric. We note $\nablao$ its Levi-Civita covariant
derivative (without the index $0$ to trim notations) and $\Ric_0$ its Ricci tensor.
Let $g=g_0+e$ be another metric on $M_0$ (at least outside a compact subset).
We have
\[
\Scal^{g_0+e}=\Scal^{g_0}+\uD\Scal_0(e)
  +Q(1,e),
\]
where the linearization of the scalar curvature at $g_0$ is, \emph{cf.} Besse
\cite{Besse}:
\[
\uD\Scal_0(e)= \divg_0\bigl(\divg_0e-d(\tr e)\bigr) -
      \langle\Ric_{0},e\rangle_0.
\]
Here traces are taken with respect to $g_0$, and
$\divg_0$ is the divergence operator of $g_0$: for a multi-index $J$
and a tensor $T_{iJ}$, $\divg_0T_J=\nablao^iT_{iJ}$.

The formal adjoint of $\uD\Scal_0$ is
\[
\uD\Scal_0^*(V)=\nablao^2V + \Delta_0V g_{0} - V \Ric_{0}
\]
where $\Delta_0=-\nablao^i\nablao_i$ is the geometric Laplacian.
The equation $\uD\Scal_0^*V=0$ is equivalent to the
fact that the metrics $g_0\pm V^2d t^2$ on $M_0\times \R$ are Einstein,
\emph{cf.} \cite{CH-massAH}.

The charge boundary integrand computed with Definition \ref{defU} is:
\begin{equation}
\mathbb{U}(V,e)= V\bigl(\divg_0e-d(\tr e)\bigr)
   - \imath_{\nablao V}e + (\tr e)d V
\label{Uscal}\end{equation}
where $\imath_X$ denotes the contraction of a vector $X$ with a (covariant) tensor.
Up to second-order terms this is the formula of \cite{CH-massAH}, see also
\cite{Herzlich-massAH}.

One checks also that, if the operator norm of $g_0^{-1}e$ is not greater
than $1/2$ (so that, with the help of the Neumann series,
$\abs{g^{-1}-g_0^{-1}}_0\leq2\abs{e}_0$ and
$\abs{\nablao g^{-1}}_0 \leq 4\abs{\nablao e}_0$), the quadratic remainder
of Equation \eqref{eqU} is bounded:
\begin{equation}
Q(1,e)\leq C\bigl(\abs{\nablao e}_0^2+\abs{e}_0\abs{\smash[t]{\nablao^2} e}_0\bigr)
\label{quadscal}\end{equation}
where $C$ is a dimensional constant (independent of $g_0$ in particular).
\begin{engrk}
This suggests that the appropriate decay conditions for the definitions
of masses and center of mass---to be presented below---should concern
two derivatives of $g-g_0$. But it is well known that only control
on the first derivatives are needed. This comes from the fact that,
using an integration by part whenever a second derivative of $e$ occurs,
one may write
\[
Q(V,e)=VQ(1,e)=Q_1(V,e)+Q'_1(\nablao V,e)+\divg_0 Q_2(V,e),
\]
where $Q_1$, $Q'_1$ and $Q_2$ are linear in their first argument, and
at least quadratic in $e$ and its first derivatives. Thus in Definition \ref{hyp.totch},
integrability of $Q(V,e)$ may be replaced by decay of $e$ and $\nablao e$ such that
$Q_1$ and $Q'_1$ are integrable and integrals of $Q_2$ over large spheres vanish in the
limit.
\label{rk.intQ}\end{engrk}

\subsubsection{The asymptotically flat case} \label{sec.ex.scal.af}
First we take $M_0=\R^n$ and $g_0$ the canonical flat metric.

This background space is asymptotically rigid in the sense of Remark \ref{rk.asrig}, see
\cite{Chrusciel-mass} and \cite{Bartnik-mass}. Namely,
any two systems of coordinates at infinity on $(M,g)$ of class
$C^2$, such that in both, the coefficients of $g$ satisfy:
\begin{equation}
e_{ij}\defeq g_{ij}-\delta_{ij}=O(r^{-\tau}),\quad \partial_kg_{ij}=O(r^{-\tau-1})
\quad \text{with } \tau>0,
\label{aflat}\end{equation}
(with $r=(\sum_ix^{i\,2})^{1/2}$),
differ in a diffeomorphism at infinity $\Psi\circ A$, where $A$ is a Poincaré
transform $A:x\mapsto Rx+T$, $R\in O(n)$, $T\in \R^n$, and $\Psi$ is
asymptotic to the identity in the sense of Definition \ref{def.psiasid},
with $\zeta$ there satisfying
\begin{equation} \label{zetaaf}
\zeta^i=O(r^{1-\tau}),\qquad\partial_j\zeta^i=O(r^{-\tau}),
   \qquad\partial_{j}\partial_{k}\zeta^i=O(r^{-\tau-1})
\end{equation}
(here of course $\exp_x\zeta(x)$ has coordinates $x^i+\zeta^i(x)$).

The space $\mathscr{N}_0$ consists of affine functions, which grow like $r$
at infinity. Part \ref{hyp.totch.lc} of Definition \ref{hyp.totch}
is satisfied if $r\,\Scal^g$ is integrable.
Part \ref{hyp.totch.q} of that definition is satisfied if \eqref{aflat}
and $\partial_k\partial_lg_{ij}=O(r^{-\tau-2})$ hold with $\tau>\frac{n-1}{2}$,
since, because of \eqref{quadscal}, the quadratic term $Q(V,e)$ is then
$O(r^{-2\tau-1})$, with $-2\tau-1<-n$.
As explained in Remark \ref{rk.intQ}, the control of the second derivatives
of $g$ is in fact superfluous.

Because of the asymptotic rigidity stated above, if we can apply Theorem \ref{mainres}
to a vector field $\zeta$ satisfying \eqref{zetaaf} for appropriate $\tau$, we will
obtain an actual invariance of the total-charge linear form on $\mathscr{N}_0$.
We could apply Corollary \ref{coroR1}, but it is not optimal for the usual asymptotic
flatness assumptions \eqref{aflat}. Let us instead compute in coordinates for example
the last term of $R_1$ in \eqref{defR1}:
\begin{multline}
\zeta^c(x)\int_0^1\partial_ce_{ij}(x+t\zeta)d t + e_{aj}(x+\zeta)\partial_i\zeta^a(x)
   +e_{ib}(x+\zeta)\partial_j\zeta^b(x)\\
   +e_{ab}(x+\zeta)\partial_i\zeta^a(x)\partial_j\zeta^b(x),
\label{R1right.af}\end{multline}
which is $O(r^{-2\tau})$ if \eqref{aflat} and \eqref{zetaaf} hold, and whose derivatives
are $O(r^{-1-2\tau})$ if moreover $\partial_k\partial_lg_{ij}=O(r^{-\tau-2})$.
The same estimates apply to the first term of the right-hand side of \eqref{defR1},
which is here $\sum_k\partial_i\zeta^k\partial_j\zeta^k$.
So the term $R_2$ of Equation \eqref{diffU} is $O(r^{-2\tau})$ under these
assumptions. If $2\tau>n-1$, its integral over large coordinates spheres vanishes in
the limit. This is what is needed to apply Theorem \ref{mainres}.
(Again, in view of Remark \ref{rk.limR2}, the control of the second derivatives
of $g$ is in fact superfluous.)

Therefore, under these decay assumptions we recover the well-known ADM mass:
\[
\mathbb{U}_{j}(1,e)=\sum_{i}\partial_ie_{ij}-\partial_{j}e_{ii}
\]
and center of mass:
\[
\mathbb{U}_{j}(x^a,e)=\sum_{i}\bigl[x^a(\partial_ie_{ij}-\partial_{j}e_{ii})
   -e_{ai}+e_{ii}\delta_{aj}\bigr].
\]

Of course the subspace $\mathscr{N}'_0$ of constant functions is invariant under
the isometries of $\R^n$. Since constant functions grow slower at infinity than
general affine functions, this allows to relax the asymptotic decay conditions
to define the ADM mass: as is well known, its definition only requires
$\Scal^g$ to be integrable and $\tau>\frac{n-2}{2}$ in \eqref{aflat}.

\subsubsection{The asymptotically hyperbolic case}\label{sec.ex.scal.ah}

Here we set $M_0=\Hyp^n$ endowed with the hyperbolic metric
$g_0=d r^2+(\sinh^2r)\breve{g}$, where $\breve{g}$ is
the canonical metric of the unit sphere $\Sph^{n-1}$ and $r$ the distance to a fixed point.
We refer to \cite{ChrNagy-massAAdS}
and \cite{CH-massAH} for a detailed treatment, which includes more general, similar cases
(notice that their coordinate $r$ is the hyperbolic sine of ours).

Asymptotic rigidity holds here \cite[Theorem 3.3]{ChrNagy-massAAdS},
\cite[proof of Theorem 2.3]{CH-massAH}, as mentioned in Remark \ref{rk.asrig}: if
a metric $g$, pulled-back on $\Hyp^n$ \emph{via} two diffeomorphisms $\Psi_{q}$,
$1\leq q\leq2$, satisfies
\begin{equation}
\abs{\Psi_q^*g-g_0}_0=O(e^{-\tau r})\quad\text{and}\quad\abs{\nablao (\Psi_q^*g)}_0=O(e^{-\tau r})
\label{ahyp}\end{equation}
for some $\tau>1$, then $\Psi_1^{-1}\circ\Psi_2=\Psi\circ A$, where $A$ is an isometry of
$g_0$, and $\Psi$ is asymptotic to the identity in the sense of Definition \ref{def.psiasid},
with
\begin{equation}\label{zetaah}
\abs{\zeta}_0, \abs{\nablao\zeta}_0, |\nablao^2\zeta|_0=O(e^{-\tau r}).
\end{equation}
We set $e\defeq\Psi^*_1g-g_0$.

The space $\mathscr{N}_0$ is generated by the coordinates of the canonical isometric
embedding
\[
(V_{(0)},V_{(1)},\dotsc,V_{(n)}):\Hyp^n\to\R^{1,n}
\]
into the Minkowski
space. In spherical coordinates one has $V_{(0)}=\cosh r$ and
$V_{(i)}=(\sinh r)\xi^i$, $i\geq1$, where $(\xi^1,\dotsc,\xi^n):\Sph^{n-1}\to\R^n$ is
the canonical embedding. These functions and their $g_0$-gradient are $O(e^r)$.
So Part \ref{hyp.totch.lc} of Definition \ref{hyp.totch}
is satisfied if $e^r\,\Scal^g$ is integrable.
Part \ref{hyp.totch.q} of that definition is satisfied if \eqref{ahyp} and
$|\nablao^2g|_0=O(e^{-\tau r})$ hold with $\tau>\frac{n}{2}$, since then
\[
Q(V,e)\ud\vol_0 = Q(V,e)(\sinh^{n-1}r)\ud r\ud\vol_{\breve g}
=  O(e^{(n-2\tau)r})\ud r\ud\vol_{\breve g}
\]
because of \eqref{quadscal}, and $n-2\tau<0$. (As explained in Remark \ref{rk.intQ}
the control of the second derivatives of $g$ is in fact superfluous.)

As for changes of asymptotically hyperbolic coordinates, because of asymptotic rigidity,
Theorem \ref{mainres} applied to $\zeta$'s satisfying \eqref{zetaah} (with appropriate $\tau$)
will yield an actual invariance of the total-charge form on $\mathscr{N}_0$. We need to check
Assumption \ref{hyp.mr.decay}: we use Corollary \ref{coroR1}. We take for $S_r$ the
distance spheres. Their volume is $\abs{S_r}=\sinh^{n-1}r$. We have $\abs{V}+\abs{\nabla V}
=O(e^r)$, and $\abs{\mathbb{U}(V,e)}\leq U (\abs{V}+\abs{\nabla V})(\abs{e}+\abs{\nabla e})$
for some constant $U$. Hence the three conditions of Corollary \ref{coroR1} are satisfied
when \eqref{ahyp}, \eqref{zetaah} and $|\nablao^2g|_0=O(e^{-\tau r})$ hold with
$\tau>\frac{n}{2}$. (Again, in view of Remark \ref{rk.limR2}, the
control of the second derivatives of $g$ is in fact superfluous.)

Thus, under the condition $\tau>\frac{n}{2}$ in \eqref{ahyp}, we recover here the
Chru\'sciel-Herzlich momentum in the asymptotically hyperbolic
setting, given by the limit of the integrals of \eqref{Uscal} over large $r$-spheres,
when $V$ is one of the $V_{(\mu)}$ above. These $V_{(\mu)}$ are reshuffled by a Lorentz
matrix through the action of the isometries of $g_0$. Therefore so are the
associated total charges.

\subsection{The operator of constraints of general relativity}
   \label{sec.ex.constr}
   
Here we take for $\Phi$
the constraints on Cauchy initial data in General Relativity. The geometric
data are a couple $(g,k)$ of a first and a second fundamental form of a
hypersurface in an ambient space-time, and
\begin{align*}
\Phi:\G(\mathscr{M} \times_M S_2M)&\longrightarrow\G(\R \oplus T^*M)\\
   (g,k)&\longmapsto\left(\begin{array}{l}\Scal^g+(\tr_gk)^2-\abs{k}^2_g\\
              2\bigl(\divg_gk-d(\tr_gk)\bigr) \end{array}\right)
              \eqdef\left(\begin{array}{c}\Phi^H(g,k)\\
                             \Phi^M(g,k)\end{array}\right).
\end{align*}
The invariance condition for $\Phi_0$ imposes $\Phi^M_0$ to vanish
and $\Phi^H_0$ to be a constant $2\Lambda$: these
are the constraint equations with cosmological constant in vacuum for
$(g_0,k_0)$. The test-section $V$ is a couple $(f,\alpha)$
of a function and a $1$-form over $M_0$. The linearized operators at $(g_0,k_0)$,
for a variation $e=(\dot g,\dot k)$, are
\begin{align*}
\uD\Phi^H_0(e)&=\divg\divg\dot g + \Delta\tr\dot g
   -\bigr\langle\Ric_0-2k_0\circ k_0+2(\tr k_0)k_0,\dot g\bigl\rangle\\
   &\quad-2\langle k_0,\dot k\rangle + 2\tr k_0 \tr \dot k\\
\uD\Phi^M_0(e)&=\imath_{\nablao\tr\dot g}k_0-k_0^{ij}\nablao\dot g_{ij}
   -2\bigl(\divg(\dot g\circ k_0)-d\langle k_0,\dot g\rangle\bigr)\\
   &\quad +2\bigl(\divg\dot k-d(\tr \dot k)\bigr)
\end{align*}
where $\Ric_0$ is the Ricci tensor of $g_0$; traces, divergences,
index lowering and raising (implicit when needed), \emph{etc.} are taken with respect
to $g_0$; and $\circ$ is the composition of $2$-tensors (using $g_0$):
$(A\circ B)_{ij}\defeq g_0^{kl}A_{ik}B_{lj}$.
From there one computes (beware of the sign convention: here it is Riemannian)
\begin{align*}
\bigl\langle V,\uD\Phi_0(e)\bigr\rangle&=f\uD\Phi_0^H(e)
      +\bigl\langle\alpha,\uD\Phi_0^M(e)\bigr\rangle\\
   &=\divg\mathbb{U}(V,e)
      +\bigl\langle(\partial_g\Phi)_0^*(V),\dot g\bigr\rangle
      +\bigl\langle(\partial_k\Phi)_0^*(V),\dot k\bigr\rangle\\
\intertext{with}
(\partial_g\Phi)_0^*(f,\alpha)&=\nablao^2f+(\Delta f)g_0
      -f\bigl(\Ric_0-2k_0\circ k_0+2(\tr k_0)k_0\bigr)\\
   &\quad+\Lieder_{\alpha}k_0 - (\divg\alpha)k_0-
      \bigl(\langle\nablao\alpha,k_0\rangle
      +\langle\alpha,\divg k_0\rangle\bigr)g_0\\
(\partial_k\Phi)_0^*(f,\alpha)&=-2f(k_0-\tr k_0 g_0)
      -\Lieder_\alpha g_0+2(\divg\alpha)g_0
\end{align*}
and
\begin{align*}
\mathbb{U}(V,e)&=f\bigl(\divg\dot g-d(\tr\dot g)\bigr)
        -\imath_{\nablao f}\dot g + (\tr\dot g) d f\\
     &\quad+2 \bigl(\imath_{\alpha}\dot k-(\tr\dot k)\alpha\bigr)
        +(\tr\dot g)\imath_{\alpha}k_0+\langle k_0,\dot g\rangle\alpha
        -2\imath_{\alpha}(\dot g \circ k_0).
\end{align*}
Up to terms quadratic in $\dot g$ and $\dot k$,
this is the general formula of Chru\'sciel, Jezierski and \L\k{e}ski
\cite[§2]{CJL-Bondimass} (notice that their $Y$ is $-g_0^{-1}\alpha$ here, and their
$P$ is not in fully covariant form, whereas $k$ here is). See also \cite{Maerten-PosMomt}
(beware of the sign conventions for the momentum constraints and the
divergence operator).

One sees also that the quadratic error $Q(V,e)$ of \eqref{eqU}, that appears in Definition
\ref{hyp.totch}, is bounded, up to a multiplicative constant independent of $g_0$ and $k_0$,
by
\begin{multline}
\abs{f}\Bigl(\abs{\dot g}^2\abs{k_0}^2 +\abs{\nablao\dot g}^2
   +\abs{\dot g}\abs{\nablao^2\dot g}+|\dot k|^2\Bigr)\\
+\abs{\alpha}\Bigl(\abs{\dot g}^2(\abs{k_0}^2+\abs{\nablao k_0})
   +\abs{\nablao\dot g}^2+|\dot k|^2+\abs{\dot g}|\nablao \dot k|\Bigr)
\label{quadconstr}\end{multline}
when the operator norm of $g_0^{-1}\dot g$ is no more than $1/2$.
(As in Remark \ref{rk.intQ}, the assumption of integrability of $Q(V,e)$
may be replaced by decay conditions on $e$ that involve neither $\nablao^2\dot g$
nor $\nablao\dot k$.)

It is known \cite{Moncrief-sym} that the kernel $\mathscr{N}_0$ of the
operator $\uD\Phi_0^*$ contains exactly the couples $(f,\alpha)$ such
that $(f,g_0^{-1}\alpha)$ is the normal-tangential decomposition
of the restriction along $M_0$ of a Killing vector field of the
(Lorentzian) Choquet-Bruhat development of $(M_0,g_0,k_0)$.

If one wants an asymptotic rigidity result as mentioned in Remark \ref{rk.asrig}
for a couple $(g_0,k_0)$, one may consider $g_0$ as in the examples of
Section \ref{sec.ex.scal} (flat or hyperbolic), and $k_0=\lambda_0g_0$. The
invariance condition on $\Phi(g_0,k_0)$ imposes $\lambda_0$ to be
constant, so that isometries of $g_0$ also fix $k_0$. The boundary integrand then
reduces to
\[
\mathbb{U}(V,e)=\bigl(f\divg_0-\imath_{\nablao f}\bigr)\bigl(\dot g-(\tr\dot g)g_0\bigr)
   +2\imath_\alpha\bigl(\dot k -\lambda_0\dot g - \tr(\dot k-\lambda_0\dot g)g_0\bigr).
\]
This is the Chru\'sciel-Jezierski-\L\k{e}ski expression for the Bondi mass
\cite[§3]{CJL-Bondimass}, under asymptotic decay conditions that do however not
allow gravitational radiation. When $\lambda_0=0$, $\mathbb{U}$ decomposes into
a $(f,\dot g)$-part equal to the $\mathbb{U}$ of Section \ref{sec.ex.scal}, and a
$(\alpha,\dot k)$-part: $2\imath_{g_0^{-1}\alpha}(\dot k-\tr_0\dot kg_0)$,
that give respectively the usual ADM and Abbott-Deser momenta.

\appendix

\section{Explicit control of the remainder %
\texorpdfstring{$R_1$}{R1}}
   \label{sec.controlR1}

Recall that we are given the data $h_0=(g_0,k_0)$ of a Riemannian metric $g_0$ and a natural
tensor $k_0$ over $M_0$. We assume that $g_0$ is complete. We consider a diffeomorphism
$\Psi:x\mapsto\exp_x\zeta(x)$ (where $\exp$ is the $g_0$ exponential map), and search for
estimates of $\nabla^\ell(\Psi^*k-k)$ and $\nabla^\ell(\Psi^*k-k-\Lieder_\zeta k)$, when
$k$ is a tensor field.

We first introduce some notations. Let
$\abs{k}_\ell(x)\defeq\abs{k(x)}_0+\cdots+\abs{\nablao^\ell k(x)}_0$, and
$\norm{k}_\ell(x)$ be the supremum of $\abs{k}_\ell$ along the geodesic $t\mapsto\exp_xt\zeta(x)$,
$t\in[0,1]$.

\begin{prop}
\begin{enumerate}
\item \label{propR1.diff}
  There exists a universal constant $\veps>0$ such that, if $\zeta$ and the sectional curvatures
  $\kappa_0$ of $g_0$ satisfy $\kappa_0\abs{\zeta}_0^2\leq\veps$ and $\abs{\nabla\zeta}\leq\veps$
  on $M_0$, then $\Psi\defeq\exp\circ\zeta$ is a diffeomorphism such that
  \[
  \frac{1}{4}g_0\leq\Psi^*g_0\leq4g_0.
  \]
\item \label{propR1.contr} Assume that Point \ref{propR1.diff} holds.
   Assume that the Riemann tensor $\Rm_0$ of $g_0$, $\zeta$ and their covariant derivatives
   up to order $\ell$ are bounded. Then there exists a constant
   $C$ such that the term $R_1$ in Equation \eqref{defR1} satisfies
   \[
   \abs{\nablao^\ell R_1}_0\leq
      C\abs{\zeta}_{\ell+1}\bigl(\abs{\zeta}_{\ell+1}\norm{k_0}_{\ell+2}
         +\norm{e_1}_{\ell+1}\bigr).
   \]
\end{enumerate}
\label{prop.R1}\end{prop}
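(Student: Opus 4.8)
The statement has two parts. Part \ref{propR1.diff} is a standard comparison-geometry fact, so I would dispatch it first and fairly quickly; the real content is the quantitative control in Part \ref{propR1.contr}, which I would prove by writing $R_1$ explicitly along the geodesic $t\mapsto\exp_x(t\zeta(x))$ and differentiating under the integral sign. The key organizing idea is that both differences $\Psi^*k-k$ (the Lie-derivative remainder) and $\Psi^*k-k-\Lieder_\zeta k$ (the second-order Taylor remainder) admit integral representations: writing $\Psi_t\defeq\exp\circ(t\zeta)$, one has $\Psi^*k-k=\int_0^1\tfrac{\ud}{\ud t}\Psi_t^*k\,\ud t$, and the integrand is $\Psi_t^*(\Lieder_{\zeta_t}k)$ for the appropriate time-dependent generator $\zeta_t$. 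A second Taylor expansion in $t$ then isolates the genuinely quadratic part. The whole point of introducing the supremum-norm $\norm{\cdot}_\ell$ along the geodesic is precisely so that the pullbacks $\Psi_t^*$ that appear inside these integrals can be absorbed into such sup-norms uniformly in $t\in[0,1]$.

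\textbf{Part 1.}
For Part \ref{propR1.diff} I would argue that under $\kappa_0\abs{\zeta}_0^2\leq\veps$ the geodesic $t\mapsto\exp_x(t\zeta(x))$ has no conjugate point on $[0,1]$, so $\exp$ is a local diffeomorphism with controlled Jacobi-field behaviour; the Rauch comparison theorem bounds the differential $\ud(\exp\circ\zeta)$ above and below in terms of $\kappa_0\abs{\zeta}_0^2$ and $\abs{\nabla\zeta}$, which directly yields $\tfrac14 g_0\leq\Psi^*g_0\leq 4g_0$ once $\veps$ is chosen small enough. Injectivity for small $\abs{\nabla\zeta}$ is a Hadamard-type argument. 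This part is routine and I would state it with a reference rather than reprove the comparison estimates.

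\textbf{Part 2 and the main obstacle.}
For Part \ref{propR1.contr} I would estimate $\nabla^\ell R_1$ term by term, using that $R_1=(\Psi^*-\Id-\Lieder_\zeta)h_0+(\Psi^*-\Id)e_1$ from \eqref{defR1}. For the first summand, the Taylor-with-integral-remainder representation expresses $(\Psi^*-\Id-\Lieder_\zeta)k_0$ as $\int_0^1(1-t)\,\tfrac{\ud^2}{\ud t^2}\Psi_t^*k_0\,\ud t$; carrying out the $t$-derivatives produces contractions of two copies of $\zeta$ (and one derivative $\nabla\zeta$) against $k_0$ and $\nabla^2 k_0$ evaluated along the geodesic, which is exactly the shape $\abs{\zeta}_{\ell+1}^2\,\norm{k_0}_{\ell+2}$ after $\ell$ further covariant differentiations. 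For the second summand, $(\Psi^*-\Id)e_1=\int_0^1\Psi_t^*(\Lieder_{\zeta}e_1)\,\ud t$ contributes one $\zeta$ and one $e_1$, giving $\abs{\zeta}_{\ell+1}\norm{e_1}_{\ell+1}$. Collecting the two yields the asserted bound. The technical heart—and the step I expect to be the main obstacle—is bookkeeping the $\nabla^\ell$ differentiation cleanly: commuting covariant derivatives past the pullback $\Psi_t^*$ generates curvature terms $\Rm_0$ and its derivatives (this is exactly why the hypothesis assumes $\Rm_0$ and $\zeta$ bounded up to order $\ell$), and one must check by induction on $\ell$ that every term so produced is dominated by the stated product of $\abs{\zeta}_{\ell+1}$ with either $\norm{k_0}_{\ell+2}$ or $\norm{e_1}_{\ell+1}$, with no loss of a derivative. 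I would package this commutation into a Faà-di-Bruno-type estimate for $\nabla^\ell(\Psi_t^*T)$ in terms of $\norm{T}_\ell$, $\abs{\zeta}_{\ell+1}$ and bounds on $\Rm_0$, prove it once, and then apply it to both summands.
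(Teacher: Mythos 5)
Your proposal follows essentially the same route as the paper: Part \ref{propR1.diff} by Jacobi-field/Rauch comparison bounding the differential of $\exp\circ\zeta$, and Part \ref{propR1.contr} by the first- and second-order Taylor formulas with integral remainder along $t\mapsto\exp_x\bigl(t\zeta(x)\bigr)$, absorbing the pullbacks $\Psi_t^*$ into the sup-norms $\norm{\cdot}_\ell$, followed by an induction on $\ell$ in which commuting $\nabla$ past the pullback produces curvature terms. The paper implements that commutation step concretely by working in a parallel frame, controlling the matrix $P(t)$ of $T_x\Psi_t$ through the Jacobi equation and then bounding $\nabla^{\ell-1}(\Psi^*\nabla-\nabla)$ via a derived Jacobi ODE, which is the same content as your proposed Fa\`a-di-Bruno-type estimate for $\nabla^\ell(\Psi_t^*T)$.
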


\begin{proof}[Proof of Proposition \ref{prop.R1}]
Since norms, covariant derivatives and curvatures always refer to $g_0$ in this proof, we shall
omit the indices $0$. The notation $\abs{\cdot}$ will here denote the
pointwise $g_0$-norm of its argument, and $\norm{\cdot}$ its supremum along the geodesic
$t\mapsto\exp_xt\zeta(x)$. The letter $C$ will refer to a positive constant that may
change from line to line.\\

{\it Proof of Part \ref{propR1.diff}}. Let us denote $\Psi_t(x)\defeq\exp_xt\zeta(x)$.
We fix $x\in M_0$, an orthonormal frame $E_i$
of $T_xM_0$, and a vector $X\in T_xM_0$, and write $\gamma(t)\defeq\Psi_t(x)$ for $t\in[0,1]$.
Let $T(t)\defeq\DP{\gamma}{t}$. For short we will write $\nabla_T$ instead of
$\gamma^*(\nabla)_{\partial_t}$ or $\frac{\nabla}{\ud t}$.
Let $E_i(t)$ be the parallel transport of $E_i$ along $\gamma$. When $k$ is a section of
a tensor bundle $K$, we denote by $k(t)$ the value of $k(\gamma(t))$ in the trivialization
of $\gamma^*K$ given by the frame $E_i(t)$. Note that $k'(t)=(\nabla_Tk)(t)$.

Let $X(t)\defeq T_x\Psi_tX$. It is the Jacobi field along $\gamma$ with initial conditions
$X(0)=X$, $X'(0)=\nablao_X\zeta(x)$. Classical comparison techniques insure that, provided
with sufficiently small upper bounds for $\abs{T(t)}^2\kappa(\gamma(t))$
(where $\kappa$ is the pointwise supremum of the sectional curvatures)
and $\abs{\nabla\zeta(x)}$, one has $1/2\abs{X(0)}\leq\abs{X(t)}\leq 2\abs{X(0)}$ along
$\gamma$ (which implies that $\Psi$ is a diffeomorphism). Such upper bounds hold if
$\kappa\abs{\zeta}^2$ and $\abs{\nabla\zeta}$ are bounded by a sufficiently small $\veps>0$
everywhere on $M_0$. This gives Part \ref{propR1.diff} of the proposition. From now on we assume
that it holds.\\

{\it Proof of Part \ref{propR1.contr} when $\ell=0$}. Let $P(t)$ be the matrix of
$T_x\Psi_t$ with respect to the frames $E_i(0)$ and $E_i(t)$. It is bounded in a
fixed neighborhood of the identity matrix.
From the Jacobi equation $\abs{P''(t)}\leq C \norm{\Rm}(x)\abs{\zeta}^2(x)$ and therefore,
for $t\leq1$, because of the initial conditions:
\[
\abs{P'(t)}\leq C\bigl(\norm{\Rm}(x)\abs{\zeta}^2(x)+\abs{\nabla\zeta}(x)\bigr).
\]
For a tensor field $k$, $\Psi_t^*k(x)$ has value $P(t)^{-1}\cdot k(t)$ in the frame $E_i(0)$ (the
action of $GL_n$ defining the tensor bundle in which $k$ lives is understood). Thus
$1/C \abs{k}\leq\abs{\Psi^*k}\leq C\abs{k}$ for some positive constant $C$;
\begin{align}
\abs{\Psi^*k(x)-k(x)}&\leq
   \int_0^1\left|\frac{\ud}{\ud t}\bigl(P(t)^{-1}\cdot k(t)\bigr)\right|\ud t
   \leq C\Bigl[\bigl(\norm{\Rm}\abs{\zeta}^2+\abs{\nabla\zeta}\bigr)\norm{k}
      +\abs{\zeta}\norm{\nabla k}\Bigr](x) \notag\\
   &\leq C\abs{\zeta}_1(x)\norm{k}_1(x);\label{R1right}
\end{align}
when $\Rm$ and $\zeta$ are bounded; and with the help of a Taylor formula
\begin{align}
\abs{\Psi^*k(x)-k(x)-\Lieder_\zeta k(x)}&\leq
   \int_0^1(1-t)\left|\frac{\ud^2}{\ud t^2} \bigl(P(t)^{-1}\cdot k(t)\bigr)\right|\ud t\notag\\
   &\leq C\Bigl[\abs{\zeta}^2\norm{\nabla^2k}
      +\abs{\zeta}\bigl(\norm{\Rm}\abs{\zeta}^2+\abs{\nabla\zeta}\bigr)\norm{\nabla k}
      +\norm{\Rm}\abs{\zeta}^2\norm{k}\Bigr](x)\notag\\
   &\leq C\abs{\zeta}_1^2(x)\norm{k}_2(x).\label{R1left}
\end{align}
Plugging Equations \eqref{R1right} and \eqref{R1left} into \eqref{defR1} with \resply\ $k=e_1$
and $k=k_0$ gives Part \ref{propR1.contr} of
the proposition in case $\ell=0$.\\

{\it Proof of Part \ref{propR1.contr} when $\ell=1$}. The proof is an
induction on $\ell$. We present the first step in detail. We parallel-transport the previous
construction along a geodesic through $x$: let us set
\begin{itemize}
\itbul $X,Y\in T_xM_0$,
\itbul $\gamma_1(s)\defeq\exp_x(sX)$,
\itbul $X(s,0),Y(s,0)$ the parallel transports of $X$ and $Y$ respectively along $\gamma_1$,
\itbul $\gamma(s,t)\defeq\Psi_t\bigl(\gamma_1(s)\bigr)$,
\itbul $X(s,t)\defeq T_{\gamma_1(s)}\Psi_tX(s,0)=\DP{\gamma}{s}$ the Jacobi field along
   $t\mapsto\gamma(s,t)$ with initial data $X(s,0)$ and
   $\nabla_{X(s,0)}\zeta\bigl(\gamma_1(s)\bigr)$, and $Y(s,t)\defeq T_{\gamma_1(s)}\Psi_tY(s,0)$
   similarly,
\itbul $E_i(t)$ the parallel transport of an orthonormal frame at $x$ along
   $t\mapsto\exp_xt\zeta(x)$.
\end{itemize}
We write $T(t,s)\defeq\DP{\gamma}{t}(s,t)$, and $\nabla_T=\frac{\nabla}{\partial t}$,
$\nabla_X=\frac{\nabla}{\partial s}$ for short. The covariant derivative of the Jacobi
equation satisfied by $Y$ in the direction of $X$ may be written
\begin{multline} \label{ODEnabla}
\nabla_T\nabla_T(\nabla_XY)-\Rm_{T,\nabla_XY}T=(\nabla_X\Rm)_{T,Y}T+\Rm_{\nabla_TX,Y}T
      +\Rm_{T,Y}\nabla_TX\\
   +\Rm_{T,X}\nabla_TY-\nabla_T\bigl(\Rm_{T,X}Y).
\end{multline}
This is a second order ODE for $\nabla_XY$, whose homogeneous part is the Jacobi equation
(a classical fact in ODE theory); the initial conditions are
\begin{align*}
\nabla_XY(s,0)&=0,\\
\frac{\nabla}{\partial t}\nabla_XY(s,0)&=\Rm_{T,X}Y(s,0)+
   \frac{\nabla}{\partial s}\frac{\nabla}{\partial t}Y(s,0)=
   \Rm_{T,Y}X(s,0)+(\nabla^2\zeta)_{X,Y}(s,0).
\end{align*}
If $\Rm$, $\nabla\Rm$ and $\zeta$ are bounded, the source term in the ODE \eqref{ODEnabla}
is bounded by
\[
C\abs{\zeta}_1(x)\abs{X(0,0)}\abs{Y(0,0)}
\]
because of the part $\ell=0$ of the proof
applied to $X(s,t)$ and $Y(s,t)$. Moreover
\[
\abs{\nabla_T\nabla_XY}_{t=0}\leq
   C\bigl(\abs{\zeta}+\abs{\nabla^2\zeta}\bigr)\abs{X(0,0)}\abs{Y(0,0)}.
\]
The method of variation of parameters then shows that
\[
\abs{\nabla_XY(s,t)}\leq C\abs{\zeta}_2(x),\quad
  \abs{\nabla_T\nabla_XY(s,t)}\leq C\abs{\zeta}_2(x)
\]

Let $\Gamma(t)$ be the matrix of the linear map $X(0,0)\otimes Y(0,0)\mapsto \nabla_XY(0,t)$
with respect to the frames $E_i(0)$ and $E_i(t)$. Then $\Psi_t^*\nabla-\nabla$ has matrix
$P(t)^{-1}\Gamma(t)$ in the frame $E_i(0)$ ($P(t)$ being again the matrix of $T_x\Psi_t$). From
the previous paragraph, the part $\ell=0$ of the proof and the ODE \eqref{ODEnabla} for
$\Gamma''(t)$, one has
\[
\abs{P(t)^{-1}\Gamma(t)}\leq C\abs{\zeta}_2,\quad
   \abs{\frac{\ud}{\ud t}P(t)^{-1}\Gamma(t)}\leq C\abs{\zeta}_2,\quad
   \abs{\frac{\ud^2}{\ud t^2}P(t)^{-1}\Gamma(t)}\leq C\abs{\zeta}_2^2.
\]
Therefore as in the case $\ell=0$,
\begin{equation*}
\abs{\Psi^*\nabla-\nabla}\leq C\abs{\zeta}_2,\quad
   \abs{\Psi^*\nabla-\nabla-\Lieder_\zeta(\nabla)}\leq C\abs{\zeta}_2^2.
\end{equation*}
Thus for any tensor field $k$:
\begin{align}
\left|\nabla\bigl(\Psi^*k-k\bigr)\right|&=\left|\bigl(\Psi^*-\Id\bigr)\bigl(\nabla k\bigr)
      -\bigl(\Psi^*\nabla-\nabla\bigr)\Psi^*k\right|
      \leq C \abs{\zeta}_2\norm{k}_2 \label{DR1right}\\
\left|\nabla\bigl(\Psi^*k-k-\Lieder_\zeta k\bigr)\right|&=
      \left|\bigl(\Psi^*-\Id-\Lieder_\zeta\bigr)\nabla k
      -\bigl(\Psi^*\nabla-\nabla\bigr)\bigl(\Psi^*k-k\bigr)\right.\notag\\
      &\qquad\left.-\bigl(\Psi^*\nabla-\nabla-\Lieder_\zeta(\nabla)\bigr)k\right|\notag\\
      &\leq C\abs{\zeta}_2^2\norm{k}_2 \label{DR1left}
\end{align}
(where the result for $\ell=0$ has been used as well). The case $\ell=1$ of the proposition
follows from \eqref{DR1right} with $k=e_1$ and from \eqref{DR1left} with $k=k_0$ in the
$\nabla$-derivative of Equation \eqref{defR1}.\\

The induction on $\ell$ uses a similar construction and a similar ODE for iterated derivatives
$\nabla_{X_1}\cdots\nabla_{X_l}Y$. They result in estimates
\[
\abs{\nabla^{\ell-1}(\Psi^*\nabla-\nabla)}\leq C\abs{\zeta}_{\ell+1},\quad
 \abs{\nabla^{\ell-1}(\Psi^*\nabla-\nabla-\Lieder_\zeta\nabla)}\leq C\abs{\zeta}_{\ell+1}^2
\]
when $\Rm$, $\zeta$ and their derivatives up to order
$\ell$ are bounded (which allows to easily control the otherwise complicated source term
and initial conditions of the ODE). The claimed control on $\nabla^\ell R_1$ is deduced as above.
The induction is straightforward but lengthy, so we do not give the details.
\end{proof}

We use Proposition \ref{prop.R1} to replace Assumption \ref{hyp.mr.decay} of Theorem \ref{mainres}
with a less \emph{ad hoc} statement. Let $\ell$ be the order of the differential operator $\Phi$.
Recalling that $\mathbb{U}(V,\eta)$ is a linear differential operator of order $\ell-1$ in
$V$ and $\eta$, there exists a nonnegative continuous function $U$ such that
\begin{equation} \label{defnormU}
\forall x\in M_0, \abs{\mathbb{U}(V,\eta)}\leq U(x)\abs{V}_{l-1}(x)\abs{\eta}_{l-1}(x).
\end{equation}
Noticing that the operator $\mathbb{U}$ is moreover a differential operator of order at most
$\ell$ in the reference data $(g_0,k_0)$, equivariant under diffeomorphisms, the function $U$
in Equation \eqref{defnormU} is bounded when the curvature of $g_0$, its derivatives up to order
$\ell-2$ and the derivatives of $k_0$ up to order $\ell$ are bounded. Applying \eqref{defnormU}
to $\eta=\Lieder_\zeta h_0$, 
we then obtain the following direct consequence of Proposition \ref{prop.R1}:

\begin{engcor}\label{coroR1}
Assume that Assumption \ref{hyp.mr.asid} of Theorem \ref{mainres} holds.
Assume moreover that the Riemann tensor of $g_0$ and its derivatives up to order $\ell-1$, and
$k_0$ and its derivatives up to order $\ell+1$ are bounded. Then Assumption \ref{hyp.mr.decay}
of Theorem \ref{mainres} is implied by the following conditions on the family of hypersurfaces
$S_k$, $\zeta$ (of Definition \ref{def.psiasid}) and $e_1$:
\begin{gather*}
\abs{\zeta}_{\ell-1}\text{ is bounded},\\
\Vol(S_k)\sup_{S_k}(\abs{V}_{\ell-1}\abs{\zeta}_{\ell}^2)\xrightarrow[k\to\infty]{}0,\\
\Vol(S_k)\sup_{S_k}(\abs{V}_{\ell-1}\norm{e_1}_{\ell}^2)\xrightarrow[k\to\infty]{}0.
\end{gather*}
\end{engcor}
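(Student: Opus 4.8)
The plan is to read the corollary as a direct quantitative combination of the two tools assembled just before its statement: the pointwise bound \eqref{defnormU} on $\mathbb{U}$ and the remainder estimate of Proposition \ref{prop.R1}, Part \ref{propR1.contr}. Since $R_2=\mathbb{U}(V,R_1)$, I would first apply \eqref{defnormU} with $\eta=R_1$ to get
\[
\abs{R_2}=\abs{\mathbb{U}(V,R_1)}\leq U\,\abs{V}_{\ell-1}\,\abs{R_1}_{\ell-1}.
\]
The hypotheses of the corollary (curvature of $g_0$ and its derivatives bounded up to order $\ell-1$, and $k_0$ and its derivatives bounded up to order $\ell+1$) in particular cover the conditions — curvature up to $\ell-2$, $k_0$ up to $\ell$ — under which the paragraph preceding the corollary guarantees that the coefficient function $U$ is bounded; so $U\leq C$ for a fixed constant.

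The heart of the estimate is then to control $\abs{R_1}_{\ell-1}=\sum_{j=0}^{\ell-1}\abs{\nabla^j R_1}$. Here the key move is to apply Proposition \ref{prop.R1}, Part \ref{propR1.contr}, \emph{not} at order $\ell$ but at each order $j\leq\ell-1$, yielding
\[
\abs{\nabla^j R_1}\leq C\,\abs{\zeta}_{j+1}\bigl(\abs{\zeta}_{j+1}\norm{k_0}_{j+2}+\norm{e_1}_{j+1}\bigr).
\]
Applicability at these orders needs $\Rm_0,\zeta$ and their derivatives bounded up to order $j\leq\ell-1$: the curvature part is a standing hypothesis, and the $\zeta$ part is exactly the first displayed condition, that $\abs{\zeta}_{\ell-1}$ be bounded. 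Monotonicity of the seminorms ($\abs{\zeta}_{j+1}\leq\abs{\zeta}_\ell$, $\norm{e_1}_{j+1}\leq\norm{e_1}_\ell$) together with boundedness of $\norm{k_0}_{\ell+1}$ (supplied by the assumption on $k_0$ up to order $\ell+1$, the geodesic supremum inheriting the global bound) let me absorb $\norm{k_0}_{\ell+1}$ into $C$ and sum over $j$ to get $\abs{R_1}_{\ell-1}\leq C\abs{\zeta}_\ell(\abs{\zeta}_\ell+\norm{e_1}_\ell)$. A single application of the arithmetic--geometric inequality to the cross term $\abs{\zeta}_\ell\norm{e_1}_\ell$ then gives $\abs{R_1}_{\ell-1}\leq C(\abs{\zeta}_\ell^2+\norm{e_1}_\ell^2)$.

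Combining the two displays yields $\abs{R_2}\leq C\,\abs{V}_{\ell-1}(\abs{\zeta}_\ell^2+\norm{e_1}_\ell^2)$ pointwise. Taking the supremum over $S_k$, splitting the two summands, and multiplying by $\Vol(S_k)$ produces exactly the two volume-weighted quantities $\Vol(S_k)\sup_{S_k}(\abs{V}_{\ell-1}\abs{\zeta}_\ell^2)$ and $\Vol(S_k)\sup_{S_k}(\abs{V}_{\ell-1}\norm{e_1}_\ell^2)$ of the last two hypotheses, each of which tends to $0$; hence $\Vol(S_k)\sup_{S_k}\abs{R_2}\to 0$, which is Assumption \ref{hyp.mr.decay}. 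The argument is almost entirely bookkeeping, and the only point I would treat as the main (if modest) obstacle is verifying that the derivative orders in the corollary's boundedness hypotheses are precisely those required \emph{simultaneously} by \eqref{defnormU} (curvature up to $\ell-2$, $k_0$ up to $\ell$) and by Proposition \ref{prop.R1} at all orders $j\leq\ell-1$ (curvature and $\zeta$ up to $\ell-1$, with the estimates themselves calling on $\abs{\zeta}_\ell$, $\norm{e_1}_\ell$ and $\norm{k_0}_{\ell+1}$), so that nothing of higher order than allowed is covertly demanded.
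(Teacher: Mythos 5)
Your proof is correct and is precisely the argument the paper intends: the corollary is stated there without separate proof, as a ``direct consequence'' of the pointwise bound \eqref{defnormU} applied to $R_2=\mathbb{U}(V,R_1)$ together with Proposition \ref{prop.R1} used at the orders $j\le\ell-1$, and your order-by-order bookkeeping (curvature up to $\ell-1$ and $\abs{\zeta}_{\ell-1}$ bounded for applicability, $\norm{k_0}_{\ell+1}$ absorbed into the constant, arithmetic--geometric splitting of the cross term $\abs{\zeta}_\ell\norm{e_1}_\ell$) reproduces exactly the three displayed conditions. The only discrepancy is on the paper's side: its phrase ``applying \eqref{defnormU} to $\eta=\Lieder_\zeta h_0$'' is evidently a slip for $\eta=R_1$, which is the application you correctly make.
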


\bibliographystyle{amsplain}
\bibliography{bibccanc}

\end{document}